\documentclass[a4paper,10pt]{article}

\usepackage{fullpage}
\usepackage[latin1]{inputenc}
\usepackage[T1]{fontenc}
\usepackage{setspace}
\usepackage[normalem]{ulem}
\usepackage{verbatim}
\usepackage{graphicx}
\usepackage{amssymb}
\usepackage{amsmath}
\usepackage{booktabs}
\usepackage{algorithmic}
\usepackage{algorithm}
\usepackage{tikz}
\usetikzlibrary{positioning}
\usetikzlibrary{decorations.pathreplacing}
\usepackage{amsthm}
\usepackage{subfigure}
\usepackage{authblk}
\usepackage{mathdots}
\newtheorem{lem}{Lemma}
\newtheorem{cor}{Corollary}
\newtheorem{theor}{Theorem}
\DeclareMathOperator{\im}{im}
 \DeclareMathOperator{\rank}{rk}

\usepackage{etoolbox}

\newcommand{\n}[1]{%
  \ifstrequal{#1}{1}{_{1}}{}
	\ifstrequal{#1}{2}{_{2}}{}
	\ifstrequal{#1}{3}{_{3}}{}
  \ifstrequal{#1}{4}{_{4}}{}
	\ifstrequal{#1}{i}{_i}{}
}

\newcommand{\lgen}[0]{\langle}
\newcommand{\rgen}[0]{\rangle}
\newcommand{\wht}[1]{H_{#1}}
\DeclareMathOperator{\dft}{DFT}
\DeclareMathOperator{\diag}{Diag}
\DeclareMathOperator{\gl}{GL}

\emergencystretch=1.1em
\newcommand{\mypar}[1]{{\bf #1.}}

\setcounter{topnumber}{2}
\setcounter{bottomnumber}{2}
\setcounter{totalnumber}{4}     
\setcounter{dbltopnumber}{2}    

\title{Characterizing and Enumerating Walsh-Hadamard Transform Algorithms}
\author{Fran\c cois Serre}
\author{Markus P\"uschel}
\affil{Department of Computer Science, ETH Zurich}
\date{}
\begin{document}
\maketitle
\begin{abstract}
We propose a way of characterizing the algorithms computing a Walsh-Hadamard transform that consist of a sequence of arrays of butterflies ($I_{2^{n-1}}\otimes \dft_2$) interleaved by linear permutations. Linear permutations are those that map linearly the binary representation of its element indices. We also propose a method to enumerate these algorithms. 
\end{abstract}

\section{Introduction}
The \emph{Walsh-Hadamard transform} (WHT) is an important function in signal processing \cite{Beauchamp:85} and coding theory \cite{MacWilliams:77}. It shares many properties with the \emph{Discrete Fourier Transform} (DFT), including a Cooley-Tukey \cite{CT} divide-and-conquer method to derive fast algorithms. Pease-like \cite{Pease:68} WHT (Fig~\ref{fig:pease}) and the iterative Cooley-Tukey WHT (Fig~\ref{fig:itct}) are two examples of these. The algorithms obtained with this method share the same structure: a sequence of arrays of \emph{butterflies}, i.e., a block computing a DFT on $2$ elements, interleaved with \emph{linear permutations}. Linear permutations are a group of permutations appearing in many signal processing algorithms, comprising the bit-reversal, the perfect shuffle, and stride permutations.

In this article, we consider the converse problem; we derive the conditions that a sequence of linear permutations has to satisfy for the corresponding algorithm to compute a WHT (Theorem~\ref{trm:characterisation}). Additionally, we provide a method to enumerate such algorithms (Corollary~\ref{trm:enumerating}).

\begin{figure}\centering
\subfigure[Pease WHT]{\includegraphics[scale=1]{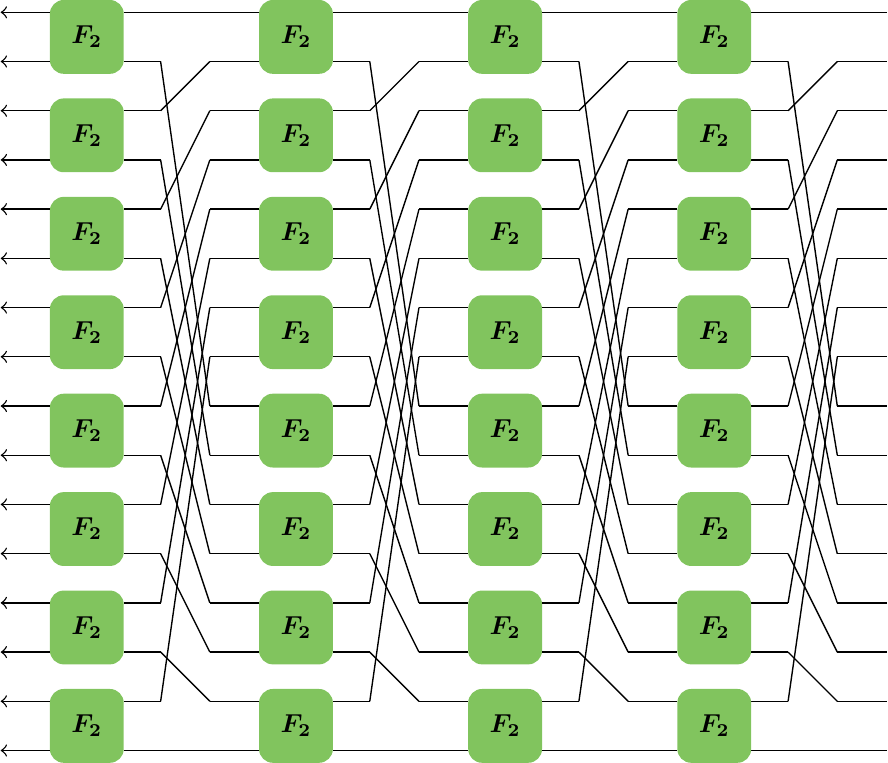}\label{fig:pease}}
\subfigure[Iterative Cooley-Tukey]{\includegraphics[scale=1]{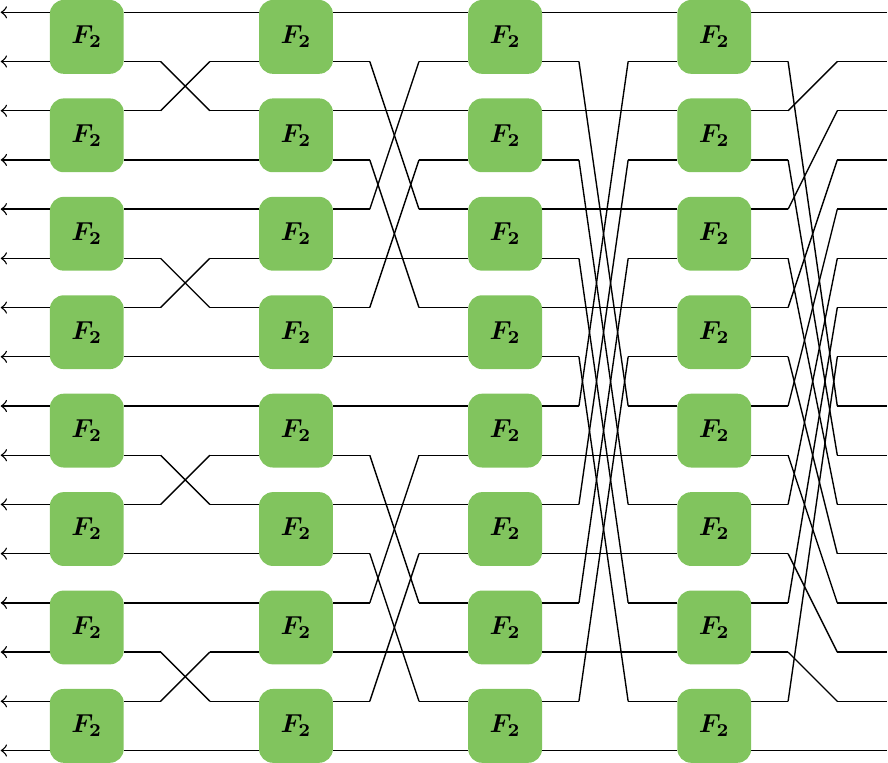}\label{fig:itct}}
\caption{Dataflow of two Cooley-Tukey derived fast algorithms computing a WHT on $16$ elements. The $F_2$ blocks represent butterflies.}
\end{figure}

\subsection{Background and notation}
\mypar{Hadamard matrix} For a positive integer $n$, and given $x\in\mathbb R^{2^n}$, the WHT computes $y=\wht{n}\cdot x$, where $\wht{n}$ is the \emph{Hadamard matrix} \cite{Hadamard:93}, the $2^n\times 2^n$ square matrix defined recursively by
\begin{equation*}
\wht{n}=\begin{cases}
\dft_2,&\text{ for n=1, and}\\
\left(\begin{array}{rr}\wht{n-1}&\wht{n-1}\\\wht{n-1}&-\wht{n-1}\end{array}\right),&\text{ for n>1.}\\
\end{cases}
\end{equation*}
Here, $\dft_2$ is the butterfly, i.e. the matrix
\[\dft_2=\begin{pmatrix}\begin{array}{rr}1&1\\1&-1\end{array}\end{pmatrix}.\]

For instance, the WHT on $8$ elements corresponds to the matrix
\[\wht{3}=\begin{pmatrix}\begin{array}{rrrrrrrr}
    1 &  1 &  1 &  1 &  1 &  1 &  1 &  1\\
    1 & -1 &  1 & -1 &  1 & -1 &  1 & -1\\
    1 &  1 & -1 & -1 &  1 &  1 & -1 & -1\\
    1 & -1 & -1 &  1 &  1 & -1 & -1 &  1\\ 
    1 &  1 &  1 &  1 & -1 & -1 & -1 & -1\\
    1 & -1 &  1 & -1 & -1 &  1 & -1 &  1\\
    1 &  1 & -1 & -1 & -1 & -1 &  1 &  1\\
    1 & -1 & -1 &  1 & -1 &  1 &  1 & -1
\end{array}\end{pmatrix}.
\]

\mypar{Binary representation} For an integer $0\leq i <2^n$, we denote with $i_b$ the column vector of $n$ bits containing the binary representation of $i$, with the most significant bit on the top ($i_b\in \mathbb F_2^n$, where $\mathbb F_2$ is the Galois field with two elements). For instance, for $n=3$, we have \[6_b=\begin{pmatrix}1\\1\\0\end{pmatrix}.\]
Using this notation, a direct computation shows that the Hadamard matrix can be rewritten as
\begin{equation}
\label{equ:def}
\wht{n}=\left((-1)^{i_b^Tj_b}\right)_{0\leq i,j < 2^n}.
\end{equation}

\mypar{Cooley-Tukey Fast WHT} Cooley-Tukey algorithms are based on the following identity, satisfied by the Hadamard matrix:
\begin{equation}
\label{equ:CT}
\wht{n}=\wht{p}\otimes \wht{q}=(\wht{p}\otimes I_{2^q})\cdot (I_{2^p}\otimes \wht{q})\text{, where } p+q=n.
\end{equation}

Using this formula recursively  -- along with properties \cite{Johnson:90} of the \emph{Kronecker product} $\otimes$ -- yields expressions consisting of $n$ arrays of $2^{n-1}$ butterflies,
\[
I_{2^{n-1}}\otimes \dft_2=\begin{array}{c}\begin{pmatrix}
\dft_2& &\\
&\ddots&\\
&&\dft_2
\end{pmatrix}\\
\underbrace{\rule{78pt}{0pt}}_{2^{n-1}\text{ times}}\end{array},
\]
interleaved by permutations. For instance, the Pease-like \cite{Pease:68} algorithm for the WHT (Fig.~\ref{fig:pease}) uses $n$ \emph{perfect shuffles}, a permutation that interleaves the first and second half of its input, and that we denote with  $\pi(C_n)$:
\begin{equation}
\label{equ:Pease}
\wht{n} =\prod_{k=1}^n\left((I_{2^{n-1}} \otimes \dft_2)\pi(C_n)\right).
\end{equation}

More generally, it is possible to enumerate all possible algorithms yielded by \eqref{equ:CT} by keeping the different values of $p$ and $q$ used in a partition tree \cite{Johnson:00}. The permutations obtained in this case are the identity, stride-permutations, and the permutations obtained by composition and Kronecker product of these. The group of $\emph{linear permutations}$ contains these \cite{Pueschel:09}.

\mypar{Linear permutation} If $Q$ is an $n\times n$ invertible bit-matrix ($Q\in \gl_n(\mathbb F_2)$), we denote with $\pi(Q)$ the associated linear permutation, i.e. the permutation on $2^n$ points that maps the element indexed by $0\leq i<2^n$ to the index $j$ satisfying $j_b=Qi_b$. As an example, the matrix
\[
C_n=\begin{pmatrix}&1&&\\&&\ddots&\\&&&1\\1&&&\end{pmatrix},
\] 
rotates the bits up, and its associated linear permutation is the perfect shuffle, hence the notation we used for it, $\pi(C_n)$.

\mypar{Sequence of linear permutations} In the rest of this article, we consider a sequence of $n+1$ invertible $n\times n$ bit-matrices $P=(P_0,P_1,\dots, P_n)$, and the computation
\[
W(P)=\pi(P_0)\cdot(I_{2^{n-1}} \otimes \dft_2)\cdot\pi(P_1)\cdot(I_{2^{n-1}} \otimes \dft_2)\dots\pi(P_{n-1})\cdot(I_{2^{n-1}} \otimes \dft_2)\cdot\pi(P_n).
\]

Note that we do not assume a priori that $P$ is such that $W(P) = \wht{n}$. In fact, we denote this subset, the set of \emph{$\dft_2$-based linear fast WHT algorithms} with $\mathcal P$: 
\[
\mathcal P =\left \{ P=(P_0,P_1,\dots, P_{n}) \text{ with } P_i \in \gl_n(\mathbb F_2) \mid W(P) = \wht{n}\right\}.
\]

\mypar{Product of matrices}
The product of the matrices $P_iP_{i+1}\cdots P_{j-1}P_j$ appears multiple times in the rest of this document, and we denote it therefore with $P_{i:j}$:
\[
P_{i:j} =\prod_{k=i}^j{P_k}.
\]
For convenience, we extend this notation by defining $P_{i:j}=I_n$ if $j<i$.

\mypar{Spreading matrix}
Similarly, the following matrix $X$ is recurring throughout this article:
\[
X=\begin{pmatrix}P_{0:n-1}1_b&P_{0:n-2}1_b& \cdots &P_{0:1}1_b&P_01_b \end{pmatrix}.
\]
Note that $1_b=\begin{pmatrix}0&\dots&0&1\end{pmatrix}^T$. Thus, $X$ results of the concatenation of the rightmost columns of the matrices $P_{0:n-1},\dots,P_{0}$. We will refer to this matrix as the \emph{spreading matrix}, as we will see that its invertibility is a necessary and sufficient condition for $W(P)$ to have no zero elements\footnote{It can be shown that a row of $W(P)$ contains at most $2^{\rank X}$ non-zero elements.}.

\subsection{Problem statement}
A na{\" i}ve approach to check if $P\in\mathcal P$ would compute $W(P)$ and compare it against $\wht{n}$. Therefore, it would perform $2n+1$ multiplications of $2^n\times 2^n$ matrices, and would have a complexity in $O(n\cdot 2^{3n})$ arithmetic operations. Our objective is to derive an equivalent set of conditions that can be checked with a polynomial complexity.

\subsection{\boldmath Characterization of WHT algorithms} Theorem~\ref{trm:characterisation} provides a necessary and sufficient set of conditions on a sequence of linear permutations such that the corresponding algorithm computes a WHT. A proof of this theorem is given in Section~\ref{sec:trm1}.
\begin{theor}\label{trm:characterisation}
$P \in \mathcal P$ if and only if the following conditions are satisfied:
\begin{itemize}
\item The product of the matrices satisfies
\begin{equation}\label{equ:P0n}
P_{0:n}=XX^T.
\end{equation}
\item The rows of the inverse of the spreading matrix are the last rows of the matrices $P_{0}^{-1}, \dots, P_{0:n-1}^{-1}$:
\begin{equation}\label{equ:Xinv}
X^{-1}=\begin{pmatrix}P_{0:n-1}^{-T}1_b&P_{0:n-2}^{-T}1_b    &\dots&P_{0:1}^{-T}1_b&P_0^{-T}1_b\end{pmatrix}^T.
\end{equation}

\end{itemize}
This set of conditions is minimal: there are counterexamples that do not satisfy one condition, while satisfying the other.
\end{theor}
\mypar{Cost} With this set of conditions, checking if a given sequence $P$ corresponds to a WHT requires $O(n^4)$ arithmetic operations.

\subsection{\boldmath Enumeration of WHT algorithms} Corollary~\ref{trm:enumerating} is the main contribution of this article. It allows to enumerate all linear fast WHT algorithms for a given $n$. For instance, all the matrices corresponding to the case $n=2$ are listed in Table~\ref{tab:comp}, and the corresponding dataflows in Fig.~\ref{fig:H2}. Section~\ref{sec:trm2} gives a proof of this corollary.
\begin{cor}
\label{trm:enumerating}
$P \in \mathcal P$ if and only if there exist $B\in \gl_n(\mathbb F_2)$ and $(Q_1, \cdots, Q_n)\in \left (\gl_{n-1}(\mathbb F_2)\right)^n$ such that
\begin{equation}
\label{equ:cor}
P_i=\begin{cases}
B\cdot\begin{pmatrix}Q_1&\\&1\end{pmatrix},&\text{ for }i=0,\\
\begin{pmatrix}Q_i^{-1}&\\&1\end{pmatrix}\cdot C_n\cdot\begin{pmatrix}Q_{i+1}&\\&1\end{pmatrix},&\text{ for }0<i<n,\\
\begin{pmatrix}Q_n^{-1}&\\&1\end{pmatrix}\cdot C_n\cdot B^T,&\text{ for }i=n.\\
\end{cases}
\end{equation}
\end{cor}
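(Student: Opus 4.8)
The plan is to build directly on Theorem~\ref{trm:characterisation}: since that theorem already reduces membership in $\mathcal P$ to the two conditions \eqref{equ:P0n} and \eqref{equ:Xinv}, it suffices to prove that the parametrization \eqref{equ:cor} is equivalent to this pair of conditions. Throughout I would write $\widehat Q_i=\begin{pmatrix}Q_i&\\&1\end{pmatrix}$, let $e_j$ denote the $j$-th standard basis vector (so $1_b=e_n$), and record the elementary facts about the bit-rotation $C_n$ that drive everything: $C_n^n=I_n$, $C_n^{-T}=C_n$, $C_n^k1_b=e_{n-k}$, and $C_n^{-k}e_{n-k}=1_b$ for $0\le k\le n-1$.

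For the \emph{if} direction I would assume \eqref{equ:cor} and first establish, by a telescoping induction in which each block-diagonal factor cancels against its inverse in the adjacent term, that $P_{0:k}=B\,C_n^{k}\,\widehat Q_{k+1}$ for $0\le k\le n-1$, and hence $P_{0:n}=B\,C_n^{n}\,B^T=BB^T$ using $C_n^n=I_n$. Since $\widehat Q_i1_b=1_b$ for every $i$, the $j$-th column of the spreading matrix collapses to $P_{0:n-j}1_b=B\,C_n^{n-j}1_b=Be_j$, so in fact $X=B$. Condition \eqref{equ:P0n} is then immediate from $P_{0:n}=BB^T$, and condition \eqref{equ:Xinv} follows by the analogous computation $P_{0:k}^{-T}1_b=B^{-T}C_n^{k}1_b=B^{-T}e_{n-k}$ (using $C_n^{-T}=C_n$ and $\widehat Q_i^{-T}1_b=1_b$), whose transpose reassembles to $B^{-1}=X^{-1}$. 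Invertibility of $B$ and the $Q_i$ guarantees each $P_i\in\gl_n(\mathbb F_2)$.

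For the \emph{only if} direction, given $P\in\mathcal P$ I would set $B:=X$ and \emph{define} $\widehat Q_{k+1}:=C_n^{-k}X^{-1}P_{0:k}$ for $0\le k\le n-1$, i.e. exactly the relation the previous computation forces. The crux is to check that each such matrix genuinely has the block-diagonal shape $\begin{pmatrix}Q_{k+1}&\\&1\end{pmatrix}$, equivalently that both its last column and its last row equal $1_b$. The last column is handled by the column structure of $X$ alone: $P_{0:k}1_b=Xe_{n-k}$, so the last column of $\widehat Q_{k+1}$ is $C_n^{-k}e_{n-k}=1_b$. The last row is where the second, less transparent condition \eqref{equ:Xinv} does the essential work: writing $1_b^T\widehat Q_{k+1}=(C_n^{k}1_b)^TX^{-1}P_{0:k}=e_{n-k}^T X^{-1}P_{0:k}$ and reading from \eqref{equ:Xinv} that row $n-k$ of $X^{-1}$ equals $1_b^TP_{0:k}^{-1}$, this collapses to $1_b^TP_{0:k}^{-1}P_{0:k}=1_b^T$. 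Hence $Q_i\in\gl_{n-1}(\mathbb F_2)$, and I obtain the required $B$ and $Q_1,\dots,Q_n$.

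Finally I would recover the three cases of \eqref{equ:cor} from these definitions. The cases $i=0$ and $0<i<n$ are purely formal telescoping identities, e.g. $\widehat Q_i^{-1}C_n\widehat Q_{i+1}=P_{0:i-1}^{-1}X\,C_n^{i-1}C_nC_n^{-i}X^{-1}P_{0:i}=P_{0:i-1}^{-1}P_{0:i}=P_i$, and use no further hypothesis. The case $i=n$ is the single place where \eqref{equ:P0n} is consumed: $\widehat Q_n^{-1}C_nB^T=P_{0:n-1}^{-1}X\,C_n^{n}X^T=P_{0:n-1}^{-1}XX^T=P_{0:n-1}^{-1}P_{0:n}=P_n$. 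I expect the main obstacle to be the bookkeeping in the last-row argument, since that is the only step where \eqref{equ:Xinv} must be aligned against the correct row index $n-k$ of $X^{-1}$ and where the two conditions of Theorem~\ref{trm:characterisation} split cleanly between controlling the shape of the $\widehat Q_i$ (via \eqref{equ:Xinv}) and closing the final factorization (via \eqref{equ:P0n}); the remaining manipulations are routine telescoping.
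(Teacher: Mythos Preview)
Your proof is correct and follows essentially the same route as the paper: both set $B=X$ (the paper's printed line ``$B=X^{-1}$'' is a typo, as its own subsequent formula $\tilde Q_i=C_n^{1-i}X^{-1}P_{0:i-1}$ together with $\tilde Q_1=B^{-1}P_0$ shows), define $\widehat Q_{k+1}=C_n^{-k}X^{-1}P_{0:k}$, use the column description of $X$ to get $\widehat Q_i1_b=1_b$ and condition~\eqref{equ:Xinv} to get $\widehat Q_i^T1_b=1_b$, and consume \eqref{equ:P0n} only in the $i=n$ case. Your write-up is in fact more explicit than the paper's, which dispatches the ``if'' direction with the phrase ``a direct computation.''
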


\begin{table}\small\center
\begin{tabular}{@{}llllll@{}}
\toprule
Ref.&$P_0$&$P_1$&$P_2$&$P_{0:n}$&$X$\\
\midrule
(a)&$\begin{pmatrix}1&\\&1\\\end{pmatrix}$ &$\begin{pmatrix}&1\\1&\\\end{pmatrix}$& $\begin{pmatrix}&1\\1&\\\end{pmatrix}$& $\begin{pmatrix}1&\\&1\\\end{pmatrix}$& $\begin{pmatrix}1&\\&1\\\end{pmatrix}$\\
(b)&$\begin{pmatrix}&1\\1&\\\end{pmatrix}$ &$\begin{pmatrix}&1\\1&\\\end{pmatrix}$& $\begin{pmatrix}1&\\&1\\\end{pmatrix}$& $\begin{pmatrix}1&\\&1\\\end{pmatrix}$& $\begin{pmatrix}&1\\1&\\\end{pmatrix}$\\
(c)&$\begin{pmatrix}1&\\1&1\\\end{pmatrix}$ &$\begin{pmatrix}&1\\1&\\\end{pmatrix}$& $\begin{pmatrix}&1\\1&1\\\end{pmatrix}$& $\begin{pmatrix}1&1\\1&\\\end{pmatrix}$& $\begin{pmatrix}1&\\1&1\\\end{pmatrix}$\\
(d)&$\begin{pmatrix}1&1\\&1\\\end{pmatrix}$ &$\begin{pmatrix}&1\\1&\\\end{pmatrix}$& $\begin{pmatrix}1&1\\1&\\\end{pmatrix}$& $\begin{pmatrix}&1\\1&1\\\end{pmatrix}$& $\begin{pmatrix}1&1\\&1\\\end{pmatrix}$\\
(e)&$\begin{pmatrix}1&1\\1&\\\end{pmatrix}$ &$\begin{pmatrix}&1\\1&\\\end{pmatrix}$& $\begin{pmatrix}1&\\1&1\\\end{pmatrix}$& $\begin{pmatrix}&1\\1&1\\\end{pmatrix}$& $\begin{pmatrix}1&1\\1&\\\end{pmatrix}$\\
(f)&$\begin{pmatrix}&1\\1&1\\\end{pmatrix}$ &$\begin{pmatrix}&1\\1&\\\end{pmatrix}$& $\begin{pmatrix}1&1\\&1\\\end{pmatrix}$& $\begin{pmatrix}1&1\\1&\\\end{pmatrix}$& $\begin{pmatrix}&1\\1&1\\\end{pmatrix}$\\
\bottomrule
\end{tabular}
\caption{Matrices of all $\dft_2$-based linear fast algorithms computing $\wht{2}$, and the corresponding product of matrices ($P_{0:n}$) and spreading matrix ($X$). The first line corresponds to the Pease algorithm, the second one to its transpose. These two algorithms are the only ones that can be obtained using \eqref{equ:CT}.}\label{tab:comp}
\end{table}
\begin{figure}\centering
\subfigure[]{\includegraphics[scale=1]{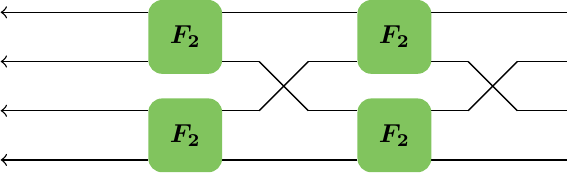}}
\subfigure[]{\includegraphics[scale=1]{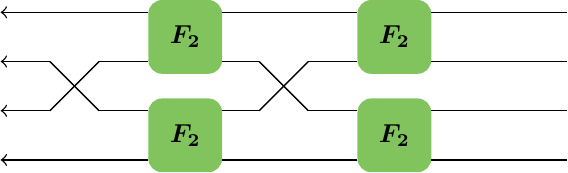}}
\subfigure[]{\includegraphics[scale=1]{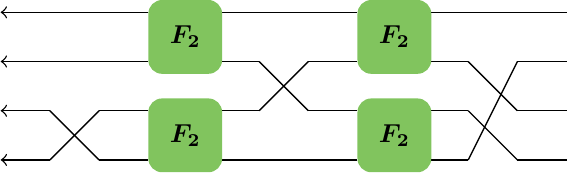}}
\subfigure[]{\includegraphics[scale=1]{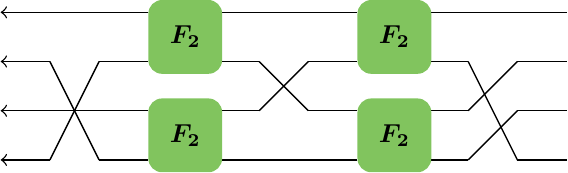}}
\subfigure[]{\includegraphics[scale=1]{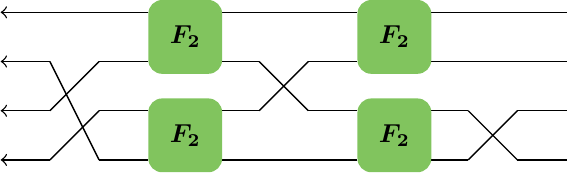}}
\subfigure[]{\includegraphics[scale=1]{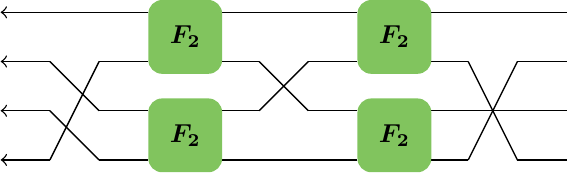}}
\caption{Dataflow of all $\dft_2$-based linear fast algorithms computing $\wht{2}$. The letters correspond to the references in Table~\ref{tab:comp}}
\label{fig:H2}
\end{figure}

\mypar{Size of $\mathcal P$} For a given $n$, Corollary~\ref{trm:enumerating} shows a direct map between $\mathcal P$ and $\gl_{n}(\mathbb F_2)\times \left (\gl_{n-1}(\mathbb F_2)\right)^n$. This yields the number of linear fast algorithms that compute a WHT:
\begin{align*}
|\mathcal P|&=\prod_{i=0}^{n-1}(2^n-2^i)\prod_{i=0}^{n-2}(2^{n-1}-2^i)^n\\
&=(2^{n+1}-2)\prod_{i=0}^{n-2}(2^{n-1}-2^i)^{n+1}.\\
\end{align*}
Table~\ref{tab:size} lists the first few values of $|\mathcal P|$. In practice, even for relatively small $n$, this size makes any exhaustive search based approach on $\mathcal P$ illusive. Storing the bit-matrices of all $\dft_2$-based linear fast algorithms computing $\wht{4}$ requires $40$GB, and there are more algorithms computing $\wht{6}$ than atoms in the earth.

\begin{table}\center
\begin{tabular}{@{}c|cccccccc@{}}
$n$&$1$&$2$&$3$&$4$&$5$&$6$&$7$&$8$\\
\midrule
$|\text{derived from \eqref{equ:CT}}|$ &$1$ &$2$ &$6$ &$24$ &$112$ &$568$ &$3032$ &$16768$\\
$|\mathcal P\cap \mathcal S_n^{n+1}|$ &$1$ &$2$ &$48$ &$31104$ &$\approx10^{9}$ &$\approx2\cdot10^{15}$ &$\approx5\cdot10^{23}$ &$\approx2\cdot10^{34}$\\
$|\mathcal P|$ &$1$ &$6$ &$18 144$ &$\approx4\cdot10^{12}$ &$\approx4\cdot10^{27}$ &$\approx10^{51}$ &$\approx7\cdot10^{84}$ &$\approx4\cdot10^{130}$\\
\end{tabular}
\caption{Number of $\dft_2$-based linear fast algorithms $|\mathcal P|$, number of $\dft_2$-based bit-index-permuted fast algorithms $|\mathcal P\cap \mathcal S_n^{n+1}|$ and number of algorithms that can be derived from \eqref{equ:CT} \cite{Johnson:00} for a given $n$.}\label{tab:size}
\end{table}

\subsection{Other transforms and other permutations}
In this paper, we consider linearly permuted fast algorithms that compute the unscaled and \emph{naturally ordered} WHT. In this section, we discuss about related algorithms, and one other set of permutations.

\mypar{Walsh transform} The \emph{sequency ordered} version, represented by Walsh matrix, only differs by a \emph{bit-reversal} permutation of its outputs. The bit-reversal is the permutation $\pi(J_n)$ that flips the bits of the indices:
\[
J_n=\begin{pmatrix}&&1\\&\iddots\\1\end{pmatrix}.
\]
All the results obtained in this paper can be used for the Walsh matrix, after multiplying $P_0$ by $J_n$ on the left. Particularly, the number of $\dft_2$-based linear fast Walsh transform is the same as for the WHT.

\mypar{Orthogonal WHT} The WHT can be made \emph{orthogonal} by scaling it by a factor of $2^{-n/2}$. Algorithms performing this transform can be obtained with our technique by using the orthogonal $\dft_2$, i.e. the butterfly scaled by a factor of $1/\sqrt{2}$.

\mypar{Bit-index permutations} Bit-index permutations are linear permutations for which the bit-matrix is itself a permutation matrix. As these are a subset of linear permutations, all the results presented here apply, and particularly, the bit-index-permuted algorithms can be enumerated using Corollary~\ref{trm:enumerating}, but using $B\in \mathcal S_n(\mathbb F_2)$ and $(Q_1,\cdots,Q_n)\in \left (\mathcal S_{n-1}(\mathbb F_2)\right)^n$. The number of these algorithms for a given $n$ is
\[
|\mathcal P\cap S_n^{n+1}|=n((n-1)!)^{n+1},
\]
Table~\ref{tab:size} gives its first few values.

\section{Proof of Theorem~\ref{trm:characterisation}}\label{sec:trm1}
In this section, we provide a proof of Theorem ~\ref{trm:characterisation}. The main idea of this proof consists in deriving a general expression of $W(P)$, assuming only that $W(P)$ has its first row and its first column filled with $1$s (Lemma~\ref{lem:dp}). Then, we match this expression with the definition of a WHT to derive necessary and necessary conditions for an algorithm to compute a WHT. Before that, in Lemma~\ref{lem:d}, we derive some consequences of the invertibility of the spreading matrix $X$, particularly on the non-zero elements of $W(P)$. In Lemma~\ref{lem:li}, we provide a necessary and sufficient condition for $W(P)$ to have a $1$-filled first row and column. We begin by defining concepts that will be used throughout this section.

\mypar{Stage of an algorithm} We will refer to the stage \emph{$k$} as an array of $\dft_2$ composed with the linear permutation associated with $P_k$: $(I_{2^{n-1}} \otimes \dft_2)\pi(P_k)$. Due to the ordering of evaluation, from right to left, we call the stage $n$ is the rightmost stage of an algorithm, and the stage $k$ is on the left of the stage $k+1$. We denote with $W_k(P)$ the matrix corresponding to the output on the left of stage $k$, i.e.,
\[
W_k(P)=(I_{2^{n-1}}\otimes \dft_2)\pi(P_k)\dots(I_{2^{n-1}}\otimes \dft_2)\pi(P_n).
\]
As a consequence, $W(P)=\pi(P_0)W_1(P)$. For practical reasons, we extend this definition for $k=n+1$ by considering that $W_{n+1}(P)=I_n$.

\mypar{\boldmath Outputs depending on input $i$ on the left of stage $k$}
For $0\leq i<2^n$, we denote with $\mathcal D_k(i)$ the set of the outputs of the $k^{th}$ stage of the algorithm that depend on the $i^{th}$ input:
\[
\mathcal D_k(i) =\left \{ j_b \mid W_k(P)[j,i]\neq 0 \right\}\subseteq \mathbb F_2^n.
\]

The dependency of the whole algorithm on the input $i$ is denoted with $\mathcal D(i)$:
\[
\mathcal D(i) =\left \{ j_b \mid W(P)[j,i]\neq 0 \right\}.
\]
Similarly, we denote with $\mathcal D_k^+(i)$ (resp. $\mathcal D_k^-(i)$) the set of the indices of the outputs for which 
\[
\begin{array}{l}
\mathcal D_k^+(i) =\left \{ j_b \mid W_k(P)[j,i]=1 \right\},\text{ and }\mathcal D_k^-(i) =\left \{ j_b \mid W_k(P)[j,i]=-1 \right\}.
\end{array}
\]
\subsection{Invertibility of the spreading matrix}
In the following lemma, we justify the name ``spreading matrix'' that we use for $X$, by showing that its invertibility conditions the ``spread'' of non-zero elements through the rows of $W(P)$, and provide some other consequences we will use later.
\begin{lem}\label{lem:d}
All the outputs of the algorithm depend on the first input, i.e. $\mathcal D(0)=\mathbb F_2^n$ if and only if the spreading matrix $X$ is invertible. In this case, for every $0\leq i<2^n$ and $1\leq k\leq n$,
\begin{itemize}
\item The set of dependency at the $k^\text{th}$ stage on the input $i$ is
\begin{equation}
\label{equ:dep}
\mathcal D_{k}(i) = (P_{k}\mathcal D_{k+1}(i))\cup (P_{k}\mathcal D_{k+1}(i)+1_b)
\end{equation}
\item The non-zero elements of $W_k(P)$ are either $1$ or $-1$:
\begin{equation}
\label{equ:unionpm}
\mathcal D_{k}(i)=\mathcal D_{k}^+(i)\cup \mathcal D_{k}^-(i).
\end{equation}
\item The $k^\text{th}$ stage modifies the set of dependencies such that: 
\begin{equation}
\label{equ:recDkp}
\mathcal D_{k}^+(i)=(P_k\mathcal D_{k+1}^+(i)+1_b)\cup\{ j_b\in P_k\mathcal D_{k+1}^+(i)\mid j_b^T 1_b=0 \}\cup\{ j_b\in P_k\mathcal D_{k+1}^-(i)\mid j_b^T 1_b=1 \}.
\end{equation}
\end{itemize}
\end{lem}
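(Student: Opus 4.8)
The plan is to analyze how a single stage $(I_{2^{n-1}}\otimes\dft_2)\pi(P_k)$ transforms the dependency sets, and then run a downward induction on $k$ from $n+1$ to $1$. First I would record the elementary behaviour of one stage: writing $A=\pi(P_k)W_{k+1}(P)$, the permutation merely relabels rows by $P_k$, so $A[j,i]\neq 0$ exactly when $j_b\in P_k\mathcal D_{k+1}(i)$, and the butterfly block combines each pair of sibling rows $\{j_b,j_b+1_b\}$ into their sum and difference. Since a sum or difference of entries can vanish only when both siblings are present, this already gives the inclusion $\mathcal D_k(i)\subseteq (P_k\mathcal D_{k+1}(i))\cup(P_k\mathcal D_{k+1}(i)+1_b)$ for every $k,i$, with no hypothesis on $X$. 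Iterating this inclusion from the base case $W_{n+1}(P)=I$, $\mathcal D_{n+1}(0)=\{0\}$, bounds $\mathcal D_1(0)$ inside $\langle P_{1:n-1}1_b,\dots,P_11_b,1_b\rangle$, and the final relabelling by $\pi(P_0)$ gives $\mathcal D(0)=P_0\mathcal D_1(0)\subseteq\im X$. As $|\im X|=2^{\rank X}$, the equality $\mathcal D(0)=\mathbb F_2^n$ forces $\rank X=n$, which proves the ``only if'' direction of the biconditional.

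For the converse and the three displayed identities I would isolate the notion of \emph{cancellation}: a stage cancels on input $i$ when some sibling pair $\{j_b,j_b+1_b\}$ lies entirely in $P_k\mathcal D_{k+1}(i)$, for then, the entries being $\pm1$ by induction, the butterfly produces a $\pm2$ together with a spurious zero, breaking both \eqref{equ:dep} and \eqref{equ:unionpm}. The heart of the argument is to show that, when $X$ is invertible, no stage ever cancels, for any input. I would carry an induction hypothesis asserting three facts about stage $k+1$: that each $\mathcal D_{k+1}(i)$ is a coset of the fixed subspace $\mathcal D_{k+1}(0)=\langle P_{k+1:n-1}1_b,\dots,P_{k+1}1_b,1_b\rangle$, that the nonzero entries of $W_{k+1}(P)$ are $\pm1$, and that \eqref{equ:recDkp} holds. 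The coset shape is what makes cancellation independent of $i$: a sibling pair lies inside $P_k\mathcal D_{k+1}(i)$, which is a coset of $P_k\mathcal D_{k+1}(0)$, precisely when $1_b\in P_k\mathcal D_{k+1}(0)$, i.e. when $P_k^{-1}1_b\in\mathcal D_{k+1}(0)$; so the whole question reduces to the single condition $P_k^{-1}1_b\notin\mathcal D_{k+1}(0)$.

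The main obstacle, and the only place invertibility of $X$ enters, is verifying exactly this condition, equivalently $1_b\notin\langle P_k1_b,P_{k:k+1}1_b,\dots,P_{k:n-1}1_b\rangle$. I would deduce it from a clean linear-independence statement: applying the invertible map $P_{0:k-1}^{-1}$ to the columns $P_{0:m}1_b$ of $X$ sends those with $m\geq k-1$ to $1_b,P_k1_b,\dots,P_{k:n-1}1_b$; since the columns of an invertible $X$ form a basis, their images form an independent family, and in particular $1_b$ lies outside the span of the remaining generators. This is the crux, as it converts the global hypothesis $X\in\gl_n(\mathbb F_2)$ into the local non-containment needed at every stage.

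Finally I would close the induction. With no cancellation, every element of the union $(P_k\mathcal D_{k+1}(i))\cup(P_k\mathcal D_{k+1}(i)+1_b)$ is genuinely nonzero, giving \eqref{equ:dep} as a disjoint union and promoting $\mathcal D_k(i)$ to a coset of $\mathcal D_k(0)=\langle P_{k:n-1}1_b,\dots,P_k1_b,1_b\rangle$; because each butterfly output equals a single surviving $\pm1$ entry, \eqref{equ:unionpm} is preserved; and a four-case bookkeeping over whether the surviving sibling sits in $\mathcal D_{k+1}^+(i)$ or $\mathcal D_{k+1}^-(i)$ and whether $j_b^T1_b$ is $0$ or $1$ reassembles \eqref{equ:recDkp}, the two subcases with $j_b\in P_k\mathcal D_{k+1}^+(i)+1_b$ merging into its first term. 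Specializing $i=0$ and $k=1$, the independence above makes $\mathcal D(0)=P_0\mathcal D_1(0)=\im X$ equal to all of $\mathbb F_2^n$, completing the ``if'' direction.
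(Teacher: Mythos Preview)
Your proposal is correct and follows essentially the same route as the paper's proof: both establish the inclusion $\mathcal D_k(i)\subseteq (P_k\mathcal D_{k+1}(i))\cup(P_k\mathcal D_{k+1}(i)+1_b)$ unconditionally and iterate it to get $\mathcal D(0)\subseteq\im X$ for the ``only if'' direction; and for the converse both run a downward induction on $k$ whose crux is the non-containment $1_b\notin\langle P_k1_b,\dots,P_{k:n-1}1_b\rangle$, obtained by observing that these vectors occur among the columns of the invertible matrix $P_{0:k-1}^{-1}X$, followed by the same four-case butterfly bookkeeping for \eqref{equ:recDkp}. The only cosmetic difference is that the paper carries the explicit coset description $\mathcal D_k(i)=P_{k:n}i_b+\langle 1_b,P_k1_b,\dots,P_{k:n-1}1_b\rangle$ through the induction, whereas you phrase the same content as ``$\mathcal D_{k}(i)$ is a coset of $\mathcal D_{k}(0)$'' together with the formula for $\mathcal D_{k}(0)$.
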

\begin{proof}
First, we assume that $\mathcal D(0)=\mathbb F_2^n$, and show that $X$ is invertible. For a given $j$, the two outputs $j_b$ and $j_b+1_b$ of a $\dft_2$ may have a dependency on the first input only if at least one of the two signals $j_b$ and $j_b+1_b$ that arrive on this $\dft_2$ depends on that input.
Therefore, we have
\[
\mathcal D_{k}(0) \subseteq P_{k}\mathcal D_{k+1}(0)\cup (P_{k}\mathcal D_{k+1}(0)+1_b).
\]
We now prove by induction that
\[
\mathcal D_{k}(0) \subseteq \lgen 1_b, P_k1_b, P_{k:k+1}1_b,\dots,P_{k:n-1}1_b\rgen.
\]
We already have that $\mathcal D_{n+1}(0) =\{0_b\}$. Assuming that the result holds at rank $k+1$, we have:
\[
\begin{array}{rcl}
\mathcal D_{k}(0) &\subseteq& P_{k}\mathcal D_{k+1}(0)\cup (P_{k}\mathcal D_{k+1}(0)+1_b)\\
 &\subseteq& P_k\lgen 1_b, P_{k+1}1_b,\dots,P_{k+1:n-1}1_b\rgen\cup (P_k\lgen 1_b, P_{k+1}1_b,\dots,P_{k+1:n-1}1_b\rgen+1_b)\\
  &=& \lgen P_k1_b, P_{k:k+1}1_b,\dots,P_{k:n-1}1_b\rgen\cup (\lgen P_{k}1_b,P_{k:k+1}1_b,\dots,P_{k:n-1}1_b\rgen+1_b)\\
  &=& \lgen 1_b, P_k1_b, P_{k:k+1}1_b,\dots,P_{k:n-1}1_b\rgen.
\end{array}
\]
Which yields the result. As a consequence, we have
\[
\mathcal D(0)=P_0\mathcal D_1(0) \subseteq P_0\lgen 1_b,P_11_b,\dots,P_{1:n-1}1_b\rgen=\im X.
\]
As $\mathcal D(0)=\mathbb F_2^n$, we have $\mathbb F_2^n\subseteq \im X$, and $X$ is therefore invertible.

Conversely, we now assume that $X$ is invertible, and prove by induction that the set of outputs after stage $k$ that depend on the $i^\text{th}$ input is
\begin{equation}
\label{equ:Dk}
\mathcal D_{k}(i) = P_{k:n}i_b+\lgen 1_b, P_k1_b, P_{k:k+1}1_b,\dots,P_{k:n-1}1_b\rgen.
\end{equation}
We already have $\mathcal D_{n+1}(i)=\{i_b\}$. Assuming \eqref{equ:Dk} for $k+1$, and considering an element
\[
j_b\in P_k\mathcal D_{k+1}(i) = P_{k:n}i_b+\lgen P_{k}1_b,P_{k:k+1}1_b, \dots,P_{k:n-1}1_b\rgen,
\]
we have $j_b + P_{k:n}i_b\in \lgen P_{k}1_b, \dots, P_{k:n-1}1_b\rgen$. 
As $X$ is invertible, so is the matrix 
\[
P_{0:k-1}^{-1}X=\begin{pmatrix}P_{k:n-1}1_b&\cdots&P_{k}1_b&1_b&P_{k-1}^{-1}1_b&\cdots&P_{1:k-1}^{-1}1_b\end{pmatrix}.
\]
Its columns are linearly independent, and particularly,  $1_b\notin \lgen P_{k}1_b, \dots,P_{k:n-1}1_b\rgen$. Therefore, $j_b + P_{k:n}i_b+1_b\notin \lgen P_{k}1_b, \dots,P_{k:n-1}1_b\rgen$, thus $j_b+1_b\notin P_k \mathcal D_{k+1}(i)$. This means that if a signal $j_b$ that arrives on a $\dft_2$ depends on an input $i$ ($j_b\in P_k\mathcal D_{k+1}(i)$), the other signal ($j_b+1_b$) doesn't. As the output of a $\dft_2$ is the sum (resp. the difference) of these, and that an input $i$ never appears on both terms of this operation, the dependency of both outputs on inputs is the union of the dependencies of the signals that arrive to this $\dft_2$. This yields \eqref{equ:dep}, and a direct computation shows that
\[
\begin{array}{rcl}
\mathcal D_{k}(i) &=& P_{k}\mathcal D_{k+1}(i)\cup (P_{k}\mathcal D_{k+1}(i)+1_b)\\
&=& (P_{k:n}i_b+\lgen P_k1_b,\dots,P_{k:n-1}1_b\rgen)\cup (P_{k:n}i_b+\lgen P_k1_b, \dots,P_{k:n-1}1_b\rgen+1_b)\\
&=& P_{k:n}i_b+\lgen 1_b, P_k1_b,\dots,P_{k:n-1}1_b\rgen.
\end{array}
\]
This yields \eqref{equ:Dk}, and $\mathcal D(0)=\mathbb F_2^n$ as a direct consequence. To be more precise, if \eqref{equ:unionpm} is satisfied at rank $k+1$, a signal $j_b$ depending on the input $i$ that arrives on the $\dft_2$ array of the $k^\text{th}$ stage is in one of these cases:
\begin{itemize}
\item $j_b\in P_k\mathcal D_{k+1}^+(i)$, and $j$ arrives on top of the $\dft_2$ ($j$ is even, i.e. $j_b^T 1_b=0$). In this case, both outputs of this $\dft_2$ depend ``positively'' on $i$: $\{j_b, j_b+1_b\} \subseteq \mathcal D_{k}^+(i)$.
\item $j_b\in P_k\mathcal D_{k+1}^-(i)$, and $j$ arrives on top of the $\dft_2$ ($j$ is even, i.e. $j_b^T 1_b=0$). In this case, both outputs of this $\dft_2$ depend ``negatively'' on $i$: $\{j_b, j_b+1_b\} \subseteq \mathcal D_{k}^-(i)$.
\item $j_b\in P_k\mathcal D_{k+1}^+(i)$, and $j$ arrives on the bottom of the $\dft_2$ ($j$ is odd, i.e. $j_b^T 1_b=1$). In this case, the top output depends ``positively'' on $i$: $j_b+1_b \in \mathcal D_{k}^+(i)$, and the bottom output ``negatively'': $j_b \in \mathcal D_{k}^-(i)$.
\item $j_b\in P_k\mathcal D_{k+1}^-(i)$, and $j$ arrives on the bottom of the $\dft_2$ ($j$ is odd, i.e. $j_b^T 1_b=1$). In this case, the top output depends ``negatively'' on $i$: $j_b+1_b \in \mathcal D_{k}^-(i)$, and the bottom output ``positively'': $j_b \in \mathcal D_{k}^+(i)$.
\end{itemize}
This yields \eqref{equ:recDkp} and \eqref{equ:unionpm} at rank $k$. As we have as well $\mathcal D_{n+1}^+(i)\cup \mathcal D_{n+1}^-(i)=\{i_b\}\cup \emptyset=\mathcal D_{n+1}(i)$, \eqref{equ:unionpm} holds for all $k$.
\end{proof}

\subsection{About condition \eqref{equ:Xinv}}
As mentioned earlier, we will derive a general expression for $W(P)$, assuming that it has its first row and columns filled with $1$s. In the following theorem, we provide an equivalent condition for this assumption.
\begin{lem}
\label{lem:li}
The following propositions are equivalent:
\begin{itemize}
\item The first row and the first column of $W(P)$ contain only $1$s:
\begin{equation}
\label{equ:d+0}
\mathcal D^+(0)=\mathbb F_2^n, \text{ and}
\end{equation}
\begin{equation}
\label{equ:d+i}
0_b\in \mathcal D^+(i),\text{ for } 0\leq i<2^n.
\end{equation}
\item The spreading matrix satisfies \eqref{equ:Xinv}.
\item No sequential product $P_{k:\ell}$ of the central matrices has a $1$ in the bottom right corner, and the same holds for the inverse of $P_{k:\ell}$:
\begin{equation}\label{equ:central}
1_b^T P_{k:\ell}1_b=0, \text{ for } 0< k \leq \ell < n, \text{ and}
\end{equation}
\begin{equation}\label{equ:centralinv}
1_b^T P_{k:\ell}^{-1}1_b=0, \text{ for } 0< k \leq \ell < n.
\end{equation}
\end{itemize}
\end{lem}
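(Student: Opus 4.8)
The plan is to route everything through the pair of central conditions \eqref{equ:central} and \eqref{equ:centralinv}, proving first that they are equivalent to \eqref{equ:Xinv} by a single matrix computation, and then that they are equivalent to the first-row/first-column property \eqref{equ:d+0}--\eqref{equ:d+i} using Lemma~\ref{lem:d} together with a transpose symmetry. First I would settle the equivalence of \eqref{equ:Xinv} with \eqref{equ:central} and \eqref{equ:centralinv}. Write $M$ for the matrix on the right of \eqref{equ:Xinv}, so that its $r$-th row is $1_b^T P_{0:n-r}^{-1}$, and compute the product $MX$ entrywise: its $(r,c)$ entry is $1_b^T P_{0:n-r}^{-1}P_{0:n-c}1_b$. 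On the diagonal this telescopes to $1_b^T1_b=1$ automatically. Off the diagonal, $P_{0:u}^{-1}P_{0:v}$ equals $P_{u+1:v}$ when $u<v$ and $P_{v+1:u}^{-1}$ when $u>v$; substituting $u=n-r$, $v=n-c$ shows the strictly lower entries range over exactly the quantities $1_b^T P_{k:\ell}1_b$ of \eqref{equ:central} and the strictly upper entries over exactly the quantities $1_b^T P_{k:\ell}^{-1}1_b$ of \eqref{equ:centralinv}, each occurring once. Hence $MX=I_n$ iff both central conditions hold; and since $X$ is square, $MX=I_n$ is the same as $X$ being invertible with $X^{-1}=M$, i.e. \eqref{equ:Xinv}. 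A useful by-product is that the central conditions alone force $X$ to be invertible.

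Next I would prove the column half: the first column of $W(P)$ is all ones iff $X$ is invertible and \eqref{equ:central} holds. An all-ones first column means $\mathcal D^+(0)=\mathbb F_2^n$, hence $\mathcal D(0)=\mathbb F_2^n$, so Lemma~\ref{lem:d} applies and gives both invertibility of $X$ and the sign bookkeeping \eqref{equ:recDkp}. Since negatives only propagate inward (a nonempty $\mathcal D_{k+1}^-(0)$ forces a nonempty $\mathcal D_k^-(0)$), an all-positive output column forces $\mathcal D_k^-(0)=\emptyset$ at every stage; with $\mathcal D_{k+1}^-(0)=\emptyset$ the recursion collapses to $\mathcal D_k^-(0)=\{\,j_b\in P_k\mathcal D_{k+1}(0)\mid 1_b^T j_b=1\,\}$, and reading the generating set of $\mathcal D_{k+1}(0)$ off \eqref{equ:Dk} shows this set is empty for all $k$ exactly when $1_b^T P_{k:\ell}1_b=0$ for all $1\le k\le\ell\le n-1$, i.e. \eqref{equ:central}. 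The converse runs the same induction forward from $\mathcal D_{n+1}^-(0)=\emptyset$: with $X$ invertible and \eqref{equ:central} in force, each stage leaves $\mathcal D_k^-(0)$ empty.

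Finally I would obtain the row half for free by symmetry and assemble. Because $\dft_2$ is symmetric and $\pi(Q)^T=\pi(Q^{-1})$, transposing $W(P)$ gives $W(P)^T=W(P')$ with $P'_i=P_{n-i}^{-1}$, so the first column of $W(P')$ is the first row of $W(P)$, while $P'_{k:\ell}=P_{n-\ell:n-k}^{-1}$ turns \eqref{equ:central} for $P'$ into \eqref{equ:centralinv} for $P$ and vice versa. Applying the column half to $P'$ then reads: the first row of $W(P)$ is all ones iff the spreading matrix of $P'$ is invertible and \eqref{equ:centralinv} holds for $P$. Assembling: the central conditions give, via the first step, invertibility of both relevant spreading matrices, hence (column and row halves) the first row and first column of $W(P)$ are all ones; conversely an all-ones first column yields $X$ invertible and \eqref{equ:central}, and an all-ones first row yields \eqref{equ:centralinv}. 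Together with the first step this closes the three-way equivalence.

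The main obstacle is the sign bookkeeping in the column half: one must verify that the negative set $\mathcal D_k^-(0)$ vanishes at every stage precisely when the central products $1_b^T P_{k:\ell}1_b$ vanish, which relies both on the monotone propagation of negatives and on reading the exact generating set of $\mathcal D_{k+1}(0)$ from \eqref{equ:Dk}. A secondary pitfall is invertibility bookkeeping: \eqref{equ:central} by itself does not produce an all-ones column, since one also needs $X$ invertible, so it is essential that the first step supplies invertibility of the relevant spreading matrices from the full pair of central conditions before the transpose reduction is invoked.
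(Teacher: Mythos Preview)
Your proposal is correct and follows essentially the same route as the paper: the entrywise computation of $MX$ to link \eqref{equ:Xinv} with \eqref{equ:central}--\eqref{equ:centralinv}, the monotone propagation $|\mathcal D_k^-(0)|\ge|\mathcal D_{k+1}^-(0)|$ via Lemma~\ref{lem:d} to extract \eqref{equ:central} from an all-ones first column, and the transpose identity $W(P)^T=W(P')$ with $P'_i=P_{n-i}^{-1}$ for the row half. Your packaging is slightly cleaner in isolating ``first column all ones $\Leftrightarrow$ $X$ invertible and \eqref{equ:central}'' as a standalone equivalence (the paper assumes both central conditions together for the converse), but this is only organizational.
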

\begin{proof}
We start by showing the equivalence between the second and the third proposition. We consider the canonical basis $(e_1,\dots,e_n)$ of $\mathbb F_2^n$. We have, for all $0< k,k'\leq n$,
\begin{align*}
e_k^T\begin{pmatrix}P_{0:n-1}^{-T}1_b&P_{0:n-2}^{-T}1_b    &\dots&P_{0:1}^{-T}1_b&P_0^{-T}1_b\end{pmatrix}^TXe_{k'}&=(P_{0:n-k}^{-T}1_b)^TP_{0:n-k'}1_b\\
&=1_b^TP_{1:n-k}^{-1}P_{1:n-k'}1_b\\
&=\begin{cases}
1 & \text{ if } k=k',\text{ or }\\
1_b^T P_{n-k-1:n-k'}1_b & \text{ if } k>k',\text{ or }\\
1_b^T P^{-1}_{n-k'-1:n-k}1_b &  \text{ if } k'>k.\\
\end{cases}
\end{align*}
The nullity of the two last cases is equivalent to \eqref{equ:Xinv} on one side, and \eqref{equ:central} and \eqref{equ:centralinv} on the other side.

We now consider the first proposition. We assume first that \eqref{equ:d+0} holds, and will show that it implies \eqref{equ:central}. As $\mathbb F_2^n=\mathcal D^+(0)\subseteq \mathcal D(0)$, we can use the results of Lemma~\ref{lem:d}. Using \eqref{equ:dep}, \eqref{equ:unionpm} and  \eqref{equ:recDkp}, we have
\begin{align*}
\mathcal D_{k}^-(i)=&\mathcal D_{k}(i)\setminus\mathcal D_{k}^+(i)\\
=&\left(P_{k}\mathcal D_{k+1}(i)\cup (P_{k}\mathcal D_{k+1}(i)+1_b)\right)\setminus\\
&\left((P_k\mathcal D_{k+1}^+(i)+1_b)\cup\{ j_b\in P_k\mathcal D_{k+1}^+(i)\mid j_b^T 1_b=0 \}\cup\{ j_b\in P_k\mathcal D_{k+1}^-(i)\mid j_b^T 1_b=1 \}\right)\\
\supseteq&\left(P_{k}\mathcal D_{k+1}(i)+1_b)\right)\setminus\left(P_k\mathcal D_{k+1}^+(i)+1_b\right)\\
=&P_k\mathcal D_{k+1}^-(i)+1_b\\
\end{align*}
Therefore, the number of outputs depending ``negatively'' on a given input $i$ can only increase within a stage:
\[
|\mathcal D_{k}^-(i)|\geq |\mathcal D_{k+1}^-(i)|.
\]

Particularly, for the first input, this means that $|\mathcal D^-(0)|=|\mathcal D_{1}^-(0)|\geq\dots\geq|\mathcal D_{n}^-(0)|$. As $\mathcal D^-(0)=\emptyset$, we have $\mathcal D_{k}^+(0)=\mathcal D_{k}(0)$ for all $k$. 
Using again \eqref{equ:dep}, \eqref{equ:unionpm} and \eqref{equ:recDkp}, we have, for all $k$,
\begin{align*}
\emptyset=&\mathcal D_{k}^-(0)\\
=&\left(P_{k}\mathcal D_{k+1}(0)\cup (P_{k}\mathcal D_{k+1}(0)+1_b)\right)\setminus\\
&\left((P_k\mathcal D_{k+1}^+(0)+1_b)\cup\{ j_b\in P_k\mathcal D_{k+1}^+(0)\mid j_b^T 1_b=0 \}\right)\\
=& \{ j_b\in P_k\mathcal D_{k+1}^+(0)\mid j_b^T 1_b=1 \},
\end{align*}
Therefore, $j_b^T1_b=0$ for all $k$ and all $j_b\in P_k\mathcal D_{k+1}^+(0)=P_k\mathcal D_{k+1}(0)=\lgen P_k1_b, P_kP_{k+1}1_b,\dots,P_{k:n-1}1_b\rgen$, which yields \eqref{equ:central}.

We now consider that \eqref{equ:d+i} is satisfied. This means that \eqref{equ:d+0} holds for $W(P)^T=W((P_n^{-1},\dots,P_0^{-1}))$, and the same computation yields \eqref{equ:centralinv}.

Finally, we suppose that \eqref{equ:central} and \eqref{equ:centralinv} hold (and therefore \eqref{equ:Xinv}, which allows to use Lemma~\ref{lem:d}). For $i=0$, \eqref{equ:recDkp} becomes
\[
\mathcal D_{k}^+(0)=(P_k\mathcal D_{k+1}^+(0)+1_b)\cup\{ j_b\in P_k\mathcal D_{k+1}^+(0)\mid j_b^T 1_b=0 \}\cup\{ j_b\in P_k\mathcal D_{k+1}^-(0)\mid j_b^T 1_b=1 \}.
\]
However, \eqref{equ:central} yields that $j_b^T1_b=0$ in all cases. Therefore, we have
\[
\mathcal D_{k}^+(0)=(P_k\mathcal D_{k+1}^+(0)+1_b)\cup\{ j_b\in P_k\mathcal D_{k+1}^+(0)\},
\]
and a direct induction yields \eqref{equ:d+0}. The same argument with \eqref{equ:centralinv} and $W(P)^T$ yields \eqref{equ:d+i}.
\end{proof}

\subsection{\boldmath General expression of $W(P)$}
When $X$ is invertible, we have $\mathcal D(i)=\mathcal D^+(i)\cup \mathcal D^-(i)$, which means that $W(P)$ is entirely determined by $\mathcal D^+$, for which we derive an expression in the following lemma.
\begin{lem}\label{lem:dp}
If the spreading matrix satisfies \eqref{equ:Xinv}, then
\begin{equation}\label{equ:Dp}
\mathcal D^+(i)=\ker \left(i_b^T P_{0:n}^T(XX^T)^{-1}\right)=\left\{j_b\mid i_b^T P_{0:n}^T(XX^T)^{-1}j_b=0\right\}.
\end{equation}
\end{lem}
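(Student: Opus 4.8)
The plan is to show that, under \eqref{equ:Xinv}, every entry of $W(P)$ is a sign $W(P)[j,i]=(-1)^{\sigma(j,i)}$ governed by a single bilinear form, and then to identify that form. Since \eqref{equ:Xinv} forces $X$ to be invertible, Lemma~\ref{lem:d} applies and $\mathcal D(i)=P_0\mathcal D_1(i)=P_{0:n}i_b+\im X=\mathbb F_2^n$; thus $W(P)$ has no zeros, $\sigma(j,i)$ is defined for all $i,j$, and $\mathcal D^+(i)=\{j_b\mid\sigma(j,i)=0\}$. The target \eqref{equ:Dp} is therefore equivalent to the identity $\sigma(j,i)=i_b^TP_{0:n}^T(XX^T)^{-1}j_b$, which I will prove in two movements: first that $\sigma$ is bilinear, i.e.\ $\sigma(j,i)=i_b^TNj_b$ for some fixed $N$, and then that $N=P_{0:n}^T(XX^T)^{-1}$.

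To obtain linearity of $\sigma(\cdot,i)$ in $j$, I would run a downward induction on the stage index $k$, tracking the sign $\sigma_k(\cdot,i)$ on the coset $\mathcal D_k(i)=P_{k:n}i_b+V_k$, where $V_k=\lgen 1_b,P_k1_b,\dots,P_{k:n-1}1_b\rgen$. Writing the offset of $j_b\in\mathcal D_k(i)$ as $v=\epsilon 1_b+P_kw$ with $\epsilon\in\mathbb F_2$ and $w\in V_{k+1}$ (a decomposition justified by $V_k=\lgen 1_b\rgen\oplus P_kV_{k+1}$, itself a consequence of the invertibility of $P_{0:k-1}^{-1}X$ used in Lemma~\ref{lem:d}), the four-case butterfly analysis in the proof of Lemma~\ref{lem:d} shows that $\sigma_k(j_b,i)=\sigma_{k+1}(\mathrm{pred},i)+(1+\epsilon)\,j_b^T1_b$, where the unique nonzero predecessor has offset $w$ at stage $k+1$. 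Expanding $j_b^T1_b=1_b^TP_{k:n}i_b+\epsilon+1_b^TP_kw$ and using $(1+\epsilon)\epsilon=0$, the only term preventing $\sigma_k(\cdot,i)$ from being affine in $v$ is the cross term $(1+\epsilon)\,1_b^TP_kw$; but $1_b^TP_kw=0$ for all $w\in V_{k+1}$ is exactly condition \eqref{equ:central}, which holds by Lemma~\ref{lem:li}. Hence each $\sigma_k(\cdot,i)$ is affine on its coset; for $k=1$ we have $V_1=\mathbb F_2^n$, so $\sigma_1(\cdot,i)$ is affine on all of $\mathbb F_2^n$, and since $\sigma(0,i)=0$ by \eqref{equ:d+i} it is in fact linear. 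Applying the same argument to $W(P)^T=W((P_n^{-1},\dots,P_0^{-1}))$, which satisfies \eqref{equ:Xinv} by the transpose symmetry already exploited in Lemma~\ref{lem:li}, yields linearity of $\sigma$ in $i$ as well, so $\sigma(j,i)=i_b^TNj_b$.

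To identify $N$, I would evaluate the bilinear form on the columns of $X$. Because $(XX^T)^{-1}X=X^{-T}$, the claim $N=P_{0:n}^T(XX^T)^{-1}$ is equivalent to $NX=P_{0:n}^TX^{-T}$, that is, to $\sigma(P_{0:n-m}1_b,i)=1_b^TP_{0:n-m}^{-1}P_{0:n}i_b=1_b^TP_{n-m+1:n}i_b$ for each $m$, where I have used that the $m$-th row of $X^{-1}$ is $1_b^TP_{0:n-m}^{-1}$ by \eqref{equ:Xinv}. The induction above yields on $V_1=\mathbb F_2^n$ the explicit shape $\sigma_1(j'_b,i)=\phi_1(i)^T(j'_b-P_{1:n}i_b)+\gamma_1(i)$, and $\sigma(0,i)=0$ removes the constant, so $\sigma(P_{0:n-m}1_b,i)=\sigma_1(P_{1:n-m}1_b,i)=\phi_1(i)^TP_{1:n-m}1_b$. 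The same recursion records $\phi_k(i)^T1_b=1_b^TP_{k:n}i_b$ and $\phi_k(i)^TP_kw=\phi_{k+1}(i)^Tw$ on $V_{k+1}$; applying the latter $n-m$ times telescopes $\phi_1(i)^TP_{1:n-m}1_b=\phi_2(i)^TP_{2:n-m}1_b=\dots=\phi_{n-m+1}(i)^T1_b=1_b^TP_{n-m+1:n}i_b$, exactly the required value. This pins down $N$ and establishes \eqref{equ:Dp}.

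The main obstacle is the stage induction of the second paragraph. The subtlety is that the sign at stage $k$ lives only on the proper coset $\mathcal D_k(i)$, so the functional $\phi_k(i)$ is determined only on $V_k$ and the recursion relates restrictions rather than genuine vectors; keeping this bookkeeping honest --- in particular decomposing the offset as $v=\epsilon 1_b+P_kw$ and checking that \eqref{equ:central} is precisely what annihilates the cross term $1_b^TP_kw$ --- is where all the work sits. Once affineness is secured, the passage to linearity (via the $1$-filled first row and column from Lemma~\ref{lem:li}), the transpose trick for the second variable, and the telescoping identification of $N$ are comparatively routine.
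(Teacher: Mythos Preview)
Your proof is correct and shares its core with the paper's: both run the same stage-by-stage induction tracking $\sigma_k(\cdot,i)$ as an affine functional on the coset $\mathcal D_k(i)$, and both single out \eqref{equ:central} as exactly what kills the cross term and preserves affineness. The paper makes the functional explicit from the start by defining $f_k$ on $V_k$ via $f_k(P_{k:\ell}1_b)=P_{\ell+1:n}^T1_b$ together with a running shift $s_k$; your $\phi_k(i)^T$ is nothing but $i_b^Tf_k$, and your telescoping identities $\phi_k^T(P_kw)=\phi_{k+1}^Tw$ and $\phi_k^T1_b=1_b^TP_{k:n}i_b$ are precisely the paper's recursion \eqref{equ:fkfkp1} and the definition of $f_k$. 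The one real divergence is the endgame. The paper reads off $\sigma(j_b,i)=i_b^TP_{0:n}^T(XX^T)^{-1}(j_b+P_{0:n}i_b+P_0s_1)$ at $k=1$ and then kills the $j$-independent piece $g(i_b)$ by a direct basis computation. You instead use $0_b\in\mathcal D^+(i)$ and $\mathcal D^+(0)=\mathbb F_2^n$ from Lemma~\ref{lem:li} to force linearity in each variable separately (the second variable via the transposed sequence $(P_n^{-1},\dots,P_0^{-1})$, whose analogue of \eqref{equ:Xinv} holds because \eqref{equ:central} and \eqref{equ:centralinv} swap under this transpose), and then pin down $N$ column by column on $X$. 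Both routes work; yours trades the paper's explicit quadratic-form calculation for a second appeal to Lemma~\ref{lem:li} and a symmetry argument.
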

\begin{proof}
We assume that $X$ satisfies \eqref{equ:Xinv}. As $X$ is invertible, the results of Lemma~\ref{lem:d} and \ref{lem:li} can be used. Additionally, for $1\leq k\leq n+1$, the vectors $\{1_b, P_k1_b, P_{k:k+1}1_b,\dots,P_{k:n-1}1_b\}$ are linearly independent (as columns of the invertible matrix $P_{0:k-1}^{-1}X$), and we define on the ($n-k+1$)-dimensional space spawn by these vectors the linear mapping $f_k$ as follows:
\[
\begin{array}{crcl}
  & 1_b & \mapsto & P^T_{k:n}1_b \\
  & P_k1_b & \mapsto & P^T_{k+1:n}1_b \\
   & \vdots & \vdots & \vdots \\
    & P_{k:n-1}1_b & \mapsto & P^T_n1_b. \\
\end{array}
\]
This mapping satisfies, for $1\leq k\leq n$, and for all $x$ in the domain of $f_{k+1}$,
\begin{equation}
\label{equ:fkfkp1}
f_{k}(P_kx)=f_{k+1}(x).
\end{equation}
Additionally, for $k=1$, this mapping is defined over $\mathbb F_2^n$, and satisfies, for all $0\leq x<2^n$,
\begin{equation}
\begin{array}{rcl}
f_1(x_b)&=&\begin{pmatrix}P_{n}^{T}1_b&P_{n-1:n}^{T}1_b &\cdots&P_{1:n}^T1_b\end{pmatrix}\begin{pmatrix}P_{1:n-1}1_b&P_{1:n-2}1_b   \cdots&1_b\end{pmatrix}^{-1}x_b\\
&=&P_{0:n}^T\begin{pmatrix}P_{0:n-1}^{-T}1_b&P_{0:n-2}^{-T}1_b&\cdots&P_{0}^{-T}1_b\end{pmatrix}\begin{pmatrix}P_{0:n-1}1_b&P_{0:n-2}1_b&\cdots&P_{0}1_b\end{pmatrix}^{-1}P_0x_b\\
&=&P_{0:n}^T(XX^{T})^{-1}P_0x_b.\\
\end{array}
\end{equation}

We define as well the vector $s_k$:
\[
s_k=1_b+P_k1_b+P_{k:k+1}1_b+\dots+P_{k:n-1}1_b.
\]

We will now prove by induction that, for $1\leq k\leq n+1$, 
\begin{equation}
\label{equ:Dkp}
\mathcal D_{k}^+(i)=\{j_b\in \mathcal D_{k}(i)\mid i_b^Tf_{k}(j_b+P_{k:n}i_b+s_k)=0\}.
\end{equation}
We have, for $k=n+1$,
\[
\begin{array}{rcl}
\mathcal D_{n+1}^+(i) & =& \{i_b\}\\
&=&\{j_b\in \mathcal D_{n+1}(i)\mid i_b^Tf_{n+1}(j_b+I_ni_b+0_b)=0\}.
\end{array}
\]
If we assume that \eqref{equ:Dkp} is satisfied at rank $k+1$, i.e $\mathcal D_{k+1}^+(i)=\{j_b\in \mathcal D_{k+1}(i)\mid i_b^T f_{k+1}(j_b+P_{k+1:n}i_b+s_{k+1})=0\},$ we have, using \eqref{equ:fkfkp1},
\[
\begin{array}{rcl}
P_k\mathcal D_{k+1}^+(i)&=&\{j_b\in P_k\mathcal D_{k+1}(i)\mid i_b^Tf_{k+1}(P_k^{-1}(j_b+P_{k:n}i_b+P_ks_{k+1}))=0\}\\
&=&\{j_b\in P_k\mathcal D_{k+1}(i)\mid i_b^Tf_{k}(j_b+P_{k:n}i_b+P_ks_{k+1})=0\}.
\end{array}
\]
Using \eqref{equ:unionpm}, we directly get
\[
P_k\mathcal D_{k+1}^-(i)=\{j_b\in P_k\mathcal D_{k+1}(i)\mid i_b^Tf_{k}(j_b+P_{k:n}i_b+P_ks_{k+1})=1\}.
\]
We can now compute $\mathcal D_{k}^+$ using \eqref{equ:recDkp}:
\[
\begin{array}{rcl}
\mathcal D_{k}^+(i)&=&(P_k\mathcal D_{k+1}^+(i)+1_b)\cup\{ j_b\in P_k\mathcal D_{k+1}^+(i)\mid j_b^T 1_b=0 \}\cup\{ j_b\in P_k\mathcal D_{k+1}^-(i)\mid j_b^T1_b=1 \}\\
&=&(P_k\mathcal D_{k+1}^+(i)+1_b)\cup\{j_b\in P_k\mathcal D_{k+1}(i)\mid i_b^Tf_{k}(j_b+P_{k:n}i_b+P_ks_{k+1})+j_b^T 1_b=0\}.\\
\end{array}
\]
The term $j_b^T1_b$ that appears for $j_b\in P_k\mathcal D_{k+1}(i)=P_{k:n}i_b+\lgen  P_k1_b, P_{k:k+1}1_b,\dots,P_{k:n-1}1_b\rgen$ satisfies, using \eqref{equ:central}, $j_b^T1_b=i_b^TP_{k:n}^T1_b=i_b^Tf_k(1_b)$. Therefore:
\[
\begin{array}{rcl}
\mathcal D_{k}^+(i)&=&(P_k\mathcal D_{k+1}^+(i)+1_b)\cup\{j_b\in P_k\mathcal D_{k+1}(i)\mid i_b^T f_{k}(j_b+P_{k:n}i_b+P_ks_{k+1})+i_b^T f_k(1_b)=0\}\\
&=&(P_k\mathcal D_{k+1}^+(i)+1_b)\cup\{j_b\in P_k\mathcal D_{k+1}(i)\mid i_b^Tf_{k}(j_b+P_{k:n}i_b+s_k)=0\}\\
&=&\{j_b\in P_k\mathcal D_{k+1}(i)+1_b\mid i_b^Tf_{k}(j_b+P_{k:n}i_b+s_k)=0\} \cup\\
&&\{j_b\in P_k\mathcal D_{k+1}(i)\mid i_b^Tf_{k}(j_b+P_{k:n}i_b+s_k)=0\}\\
&=&\{j_b\in \mathcal D_{k}(i)\mid i_b^Tf_{k}(j_b+P_{k:n}i_b+s_k)=0\},
\end{array}
\]
which yields the result.

We can now provide a first expression for $\mathcal D^+$, using \eqref{equ:fkfkp1}:
\[
\begin{array}{rcl}
\mathcal D^+(i)&=&P_0\mathcal D^+_1(i)\\
&=&P_0\{j_b\mid i_b^Tf_1(j_b+P_{1:n}i_b+s_1)=0\}\\
&=&P_0\{j_b\mid i_b^TP_{0:n}^T(XX^T)^{-1}(P_0j_b+P_{0:n}i_b+P_0s_1)=0\}\\
&=&\{j_b\mid i_b^TP_{0:n}^T(XX^T)^{-1}(j_b+P_{0:n}i_b+P_0s_1)=0\}.\\
\end{array}
\]

The last step consists in showing that the mapping $g: i_b\mapsto i_b^TP_{0:n}^T(XX^T)^{-1}(P_{0:n}i_b+P_0s_1)$ is null. We consider the canonical basis $(e_1,\dots,e_{n})$ of $\mathbb F_2^n$, and a vector $x=\sum_i\lambda_iP_{0:n}^{-1}Xe_i$ of this space. A direct computation yields:
\begin{align*}
g(x)&=x^TP_{0:n}^T(XX^T)^{-1}(P_{0:n}x+P_0s_1)\\
&=(X^{-1}P_{0:n}x)^TX^{-1}(P_{0:n}x+P_0s_1)\\
&=\left(X^{-1}P_{0:n}\sum_i\lambda_iP_{0:n}^{-1}Xe_i\right)^TX^{-1}\left(P_{0:n}\sum_j\lambda_jP_{0:n}^{-1}Xe_j+\sum_k Xe_k\right)\\
&=\sum_i\lambda_ie_i^T\sum_j(\lambda_j+1)e_j\\
&=\sum_{i,j}\lambda_i(\lambda_j+1)e_i^Te_j\\
&=\sum_{i=j}\lambda_i(\lambda_j+1)e_i^Te_j+\sum_{i\neq j}\lambda_i(\lambda_j+1)e_i^Te_j\\
&=0.\\
\end{align*}
%
%
%
%
%
%
\end{proof}

\subsection{Proof of theorem \ref{trm:characterisation}}
The proof of theorem \ref{trm:characterisation} is now straightforward. 

If $P\in \mathcal P$, then $W(P)$ has its first column and row filled up with $1$s. Lemma~\ref{lem:li} ensures \eqref{equ:Xinv}, and we can use Lemma~\ref{lem:dp}. Identifying \eqref{equ:Dp} with the definition \eqref{equ:def} of $\wht{n}$ yields that $P_{0:n}^T=XX^T$, i.e. that \eqref{equ:P0n} is satisfied.

Conversely, if \eqref{equ:P0n} and  \eqref{equ:Xinv} are satisfied, $W(P)=\wht{n}$ is a direct consequence of Lemma~\ref{lem:dp}.

\section{Proof of Corollary~\ref{trm:enumerating}}\label{sec:trm2}
We first assume that $(P_0,\dots,P_n) \in \mathcal P$, and construct a set of matrices satisfying \eqref{equ:cor}. We define $B=X^{-1}$, and, by induction, the matrices 
\[
\tilde Q_i=\begin{cases}
B^{-1}\cdot P_0,&\text{ for }i=1,\\
C_n^{-1}\cdot \tilde Q_{i-1}\cdot P_{i-1},&\text{ for }1<i\leq n.\\
\end{cases}
\]
By construction, these matrices satisfy, for $0<i\leq n$, $\tilde Q_i=C_n^{1-i}X^{-1}P_{0:i-1}$. Using \eqref{equ:P0n}, we get:
\[
P_i=\begin{cases}
B\cdot \tilde Q_1,&\text{ for }i=0,\\
\tilde Q_i^{-1} \cdot C_n\cdot \tilde Q_{i+1},&\text{ for }0<i<n,\\
\tilde Q_n^{-1} \cdot C_n\cdot \tilde B^T,&\text{ for }i=n.\\
\end{cases}
\]
The last step consists in showing that, for $0<i\leq n$, there exists a matrix $Q_i\in\gl_{n-1}(\mathbb F_2)$ such that $\tilde Q_i=\begin{pmatrix}Q_{i}&\\&1\end{pmatrix}$, or equivalently, that $\tilde Q_i1_b=\tilde Q_i^T1_b=1_b$:
\begin{align*}
\tilde Q_i1_b&=C_n^{1-i} X^{-1}P_{0:i-1}1_b\\
&=C_n^{1-i} \begin{pmatrix}P_{0:n-1}1_b&\dots&P_{0:1}1_b&P_01_b\end{pmatrix}^{-1}P_{0:i-1}1_b\\
&=1_b.
\end{align*}
A similar computation, using \eqref{equ:Xinv}, shows that $\tilde Q_i^T1_b=1_b$.

Conversely, if a set of matrices satisfy \eqref{equ:cor}, a direct computation (with \cite{Johnson:90}) shows that \eqref{equ:P0n} and \eqref{equ:Xinv} are satisfied. Thus. $P\in \mathcal P$.

\section{Conclusion}
We described a method to characterize and enumerate fast linear WHT algorithms. As methods are known to implement optimally linear permutations on hardware \cite{Serre:16a}, a natural future work would consist in finding the best WHT algorithm that can be implemented for a given $n$. Additionally, it would be interesting to see if similar algorithms exist for other transforms, like the $\dft$.

\bibliographystyle{plain}
\bibliography{bib}
\end{document}